\newif{\ifMarginalComments}
\newcounter{ncomm}
\newtheorem{theorem}{Theorem}
\newtheorem{lemma}[theorem]{Lemma}
\newtheorem{proposition}[theorem]{Proposition}
\newtheorem{definition}[theorem]{Definition}
\newtheorem{example}[theorem]{Example}
\newtheorem{problem}[theorem]{Problem}
\newtheorem{remark}[theorem]{Remark}
\newcommand{\calG}{\mathcal{G}}
\newcommand{\Gcal}{\mathcal{G}}
\newcommand{\calP}{\mathcal{P}}
\newcommand{\bbN}{\mathbb{N}}
\newcommand{\bbP}{\mathbb{P}}
\def\distr{\mathsf{Distr}}
\def\e{\mathsf{E}}
\def\b{\mathsf{B}}
\def\a{\mathsf{A}}
\def\m{\mathsf{M}}
\def\prob#1{\bbP_{#1}}
\def\bm{\mathsf{BM}}
\def\set#1{\{#1\}}
\def\wattr{\mathsf{Attr}}
\def\away{\mathsf{Away}}
\def\bsigma{\bar{\sigma}}
\def\asigma{\vec{\sigma}}
\def\mtau{\bar{\tau}}
\def\parity{\mathsf{Par}}
\def\plays{\mathsf{Plays}}
\def\first{\mathsf{First}}
\def\last{\mathsf{Last}}
\def\paths{\mathsf{Paths}}
\def\supp{\mathsf{Supp}}
\def\s{\mathsf{Safe}}
\def\us{\overline{\s}}
\title{Banach-Mazur Parity Games\\ and\\ Almost-sure Winning Strategies}
\author{Youssouf Oualhadj
\institute{LACL, U-Pec\\ Paris, France}
\email{youssouf.oualhadj@lacl.fr}
\and
Léo Tible
\institute{ENS Paris Saclay\\ Paris, France}
\email{ltible@ens-paris-saclay.fr}
\and 
Daniele Varacca
\institute{LACL, U-Pec\\ Paris, France}
\email{daniele.varacca@lacl.fr}
}
\begin{document}
\maketitle

\begin{abstract}
  Two-player stochastic games (sometimes referred to as
  $2\frac{1}{2}$-player games) are games with two players and a
  randomised entity called ``nature''. A natural question to ask in
  this framework is the existence of strategies that ensure that an event
  happens with probability 1 (almost-sure strategies). 
  In the case of Markov decision processes ($1\frac{1}{2}$-player games), when the event of
  interest is given as a parity condition, we can replace the "nature" by two more players that play
  according to the rules of what is known as \emph{Banach-Mazur game}~\cite{ACV10}.
  In this paper we continue this research program by extending the
  above result to two-player stochastic parity games. As in the
  paper~\cite{ACV10}, the basic idea is that, under the correct
  hypothesis, we can replace the randomised player with two players playing a
  Banach-Mazur game. This requires a few technical observations, and a
  non trivial proof, that this paper sets out to do.  
\end{abstract}

\section{Introduction}
In the fields of control and design of reactive systems, one often
faces the problem of verifying whether a system, which is interacting
with its environment, has the desired behaviour or not. The
mathematical interpretation of this problem, also known as Church
synthesis problem, is usually modelled by a game played on
graphs. Such a game involves two players, the first one shall be
called \emph{Eve} represents the controller and the second one shall
be called \emph{Adam} and represents the environment\footnote{one
  usually supposes the environment to be antagonist}. The behaviour we
want the controller to ensure, usually called the \emph{objective}, is
given by a set of infinite paths over a graph. In order to check
whether Eve can ensure the objective, the graph is partitioned into two
sets of states; Eve's state and Adam's state. When the play is in Eve's
states, she chooses the next state of the play, and when it is in Adam's
states, he chooses the successor. Therefore, a play generates an infinite
path and Eve wins the play if it is in the objective. Finally, the
problem is to synthesise (compute) a strategy for Eve to ensure
that the generated play is always in the objective.

Two-player stochastic games are a generalisation of the former model in which
the graph is partitioned into three sets of states; Eve's,
Adam's, and \emph{Nature}'s. The role of Nature is to add an
element of surprise. When the play is in a Nature's state, the
successor is chosen according to a \emph{coin flip}. In this case the outcome of a
play is no more a unique path but a set of paths. The synthesis
problem becomes then whether there exists a strategy for Eve such that the
measure of the set of generated paths is larger than a given
threshold. In~\cite{GH10}, it is shown that under the appropriate
assumptions, the core analysis amounts to deciding whether there exists
a strategy such that the measure of the generated paths is 1. Call such a
strategy an \emph{almost-surely winning} strategy.

From a system design point of view, as already mentioned in~\cite{VV12},
the randomisation adds some kind of fairness in the general behaviour
of the system. For instance, consider a game
where Eve wins if she repeats a self loop infinitely. If even a little randomisation is added so that Eve is not sure that she can stay in the loop, she cannot follow this strategy any longer. Thus, one can
say that winning in the framework of Stochastic games is somehow more
realistic as one cannot count on a contrived behaviour to win.

A different approach for implementing fairness is by means of
topology. The main idea is that a strategy is winning if the set paths
it induces is \emph{topologically large}. In some rather general cases,
sets with measure 1 coincide with those topologically large sets. In
particular, Staiger~\cite{Staiger97} and Varacca and Volzer~\cite{VV12} in a
separate work have shown that on finite Markov chains,
$\omega$-regular sets of infinite words have measure one if and
only if they have the topological property of being \emph{co-meager}.
This property can be equivalently characterised by the notion of
Banach-Mazur games~\cite{Ox}, where two players alternately play
finite sequences of paths on the graph structure of the Markov
chain. Brihaye et. al. have shown that this result extends to countable intersection of
$\omega$-regular sets~\cite{BHM15}.

To further the game-theoretic intuition of~\cite{VV12}, Asarin
et. al.~\cite{ACV10} have shown an equivalence between finite Markov
decision processes (one-player stochastic games) on $\omega$-regular
objectives, and a new notion of three-player games, where the players
of the Banach-Mazur game alternately try to help the controller
satisfy its objective, or spoil it. What that paper reveals is that,
while the basic idea of replacing a probabilistic notion with a
topological one is sound and intuitive, some technicalities have to be
spelled out correctly in order for the framework to work.

Our contributions are as follows. We introduce the notion of two-player
$\bm$ parity games which is a generalisation of Banach-Mazur games
(c.f. Section~\ref{sec:eabm}). We show that these games are
positionally determined (c.f. Theorem~\ref{thm:PosDet}) for parity
objectives using an approach à la Zielonka. We also show that a slight
change in the order of quantifier of the definition of the game is enough to lose determinacy
(c.f. Example~\ref{ex:determinacy}).   Finally, we draw a link
between two-player stochastic parity games and two-player $\bm$ parity
games. In particular, we show that there exists an almost-surely
winning strategy in a two-player stochastic parity game if and only if
there exists a winning strategy in a well chosen two-player $\bm$ game
(c.f. Theorem~\ref{thm:trsfrt}), this last result subsumes the one
of~\cite{ACV10}, although the proof there, being applied to a simpler case, is simpler.


\section{Preliminaries}
\subsection{Stochastic games}
For a finite set $S$ we denote $\distr(S)$ the set of all discrete probability
distributions over $S$, that is the set of functions $d: S\to [0,1]$
such that $\sum_{s\in S}d(s) = 1$. For a distribution $d$ in
$\distr(S)$, we denote by $\supp(d)$ the set $\set{s\in S \mid d(s) > 0}$.

Given a graph $(S,E)$, and an element $s\in S$, we denote $\paths(S,E,s)$ as the set
of infinite sequences $w\in S^\omega$ such that $w(0)=s$, and for any $i$, $(w(i),w(i+1))\in E$.

\paragraph{Games and plays}
A \emph{stochastic game} is a tuple $\calG=(S,(S_{\e},S_{\a}),A,\calP)$
where $S$ is a finite set of states, $(S_{\e},S_{\a})$ is a partition
of $S$ such that $S_{\e}$ is the set of states controlled by $\e$ (Eve)
and $S_{\a}$ is the set of states controlled by $\a$ (Adam), $A$ is a
finite nonempty set of actions, and $\calP:S\times A\to \distr(S)$ is
a total transition function.

A \emph{play} is an infinite sequence $s_0a_0s_1a_1\cdots\in (SA)^\omega$.
We denote by $S_n$ the random variable with
values in $S$ that maps each play to its $n$th state and by $A_n$ the
random variable with values in $A$ that maps each play to its $n$th
action. Formally $S_n(s_0a_0s_1a_1\cdots)=s_n$, and
$A_n(s_0a_0s_1a_1\cdots)=a_n$.

\paragraph{Strategies and Measures}
A \emph{strategy} for $\e$ is a function that tells her what is the next action
to play, given a partial play of the game. Formally it is a function $\sigma:(SA)^*S_{\e}\to A$.  We define
strategies for $\a$ as $\tau:(SA)^*S_{\a}\to A$.

Once a pair of strategies is chosen $(\sigma,\tau)$ and an initial
state $s$ is fixed, we associate the probability measure
$\prob{s}^{\sigma,\tau}$ over $S^\omega$ as the only measure over the Borel sets of
$S^\omega$ such that:

\begin{align*}
  &\prob{s}^{\sigma,\tau}(S_0 = s)=1\enspace,\\
  &\prob{s}^{\sigma,\tau}(S_{n+1}= s \mid S_n = s_n) = 
    \begin{cases}
      \calP(s_n,\sigma(s_0a_0\cdots s_n))(s_{n+1})\text{ if } s_n \in S_{\e}\enspace.\\
      \calP(s_n,\tau(s_0a_0\cdots s_n))(s_{n+1})\text{ if } s_n \in
      S_{\a}\enspace.
    \end{cases}
\end{align*}
The existence and uniqueness of such a measure is a consequence of Carathéodory's extension theorem.

\paragraph{Objectives}
An objective is a measurable subset of plays $\Phi\subseteq S^\omega$.
We say that $\e$ wins almost-surely from a state
$s$ if she has a strategy $\sigma$ such that for every strategy
$\tau,~\prob{s}^{\sigma,\tau}(\Phi)=1$.

\subsection{Banach-Mazur Games}

The notion of Banach-Mazur game \cite{Ox} can be presented with different levels of generality. Here we choose to present it in a form that is most suitable to our needs.

Let $T$ be a set, and $X\subseteq T^\omega$ a set of infinite words, and $\Phi\subseteq X$ an objective. The \emph{Banach-Mazur game} on $X$ with objective $\Phi$ is played as follows. There are two players, that we can call Banach (le ``bon'') and Mazur (le ``méchant''). Mazur begins by playing a finite prefix $w_0$ of some word in $X$.
Then Banach extends $w_0$ with another finite prefix $w_1$ of some word in $X$. The play continues, generating an infinite sequence $w_0<w_1<w_2\ldots$ If the limit of this sequence belongs to the objective, then Banach wins, otherwise Mazur wins. It was proven by (the real) Banach and Mazur that Banach wins if and only if the objective has the topological property of being \emph{co-meager} in the Cantor topology induced on $X$ by $T^\omega$ \cite{Ox}.

A special case is when there is a graph structure $(S,E)$, an initial
element $s\in S$ and $X$ is $\paths(S,E,s)$. In this
cas we talk about the Banach-Mazur game on a graph.

\paragraph{Banach-Mazur games and Markov chains}

A Markov chain on a set of states $S$ is given by a function 
$\calP:S\to \distr(S)$. It induces a graph $(S,E_{\calP})$, where
$(s,s')\in E_{\calP}$ if and only $s'\in\supp(\calP(s))$. A Markov
chain can be seen as a stochastic game where the players always have exactly one available choice.
Therefore, similarly to what we have described
above, given an initial state $s$, a Markov chain generates a Borel
probability measure on $\paths(S,E_{\calP},s)$ It is well known that $\omega$-regular sets are measurable~\cite{BK08}.
Varacca and Völzer~\cite{VV12} (see also~\cite{Staiger97}) have shown the following result:
\begin{theorem}
\label{thm:MCBM}
Let $\calP$ be a Markov chain on a finite set $S$. Let $s$ be an initial state, and let $\Phi$ be an $\omega$-regular subsets of $\paths(S,E_\calP,s)$. Then $\Phi$ has measure 1 under $\calP$ if and only if Banach wins the Banach-Mazur game on $(S,E_\calP,s)$ with objective $\Phi$.
\end{theorem}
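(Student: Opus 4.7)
\emph{Proof plan.} The plan is to reduce the statement to the classical Banach--Mazur theorem (Banach wins iff $\Phi$ is co-meager) and then to bridge the gap between measure one and co-meagerness for $\omega$-regular sets, exploiting the structure of finite Markov chains. The first move is to invoke the classical Banach--Mazur characterisation mentioned just above Theorem~\ref{thm:MCBM}: Banach has a winning strategy iff $\Phi$ is co-meager in the relative Cantor topology on $\paths(S, E_\calP, s)$. This reduces the entire statement to showing that $\Phi$ has measure one under $\calP$ iff $\Phi$ is co-meager.

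Next I would exploit the bottom strongly connected component (BSCC) structure of the Markov chain. Let $C_1, \dots, C_k$ be the BSCCs reachable from $s$. It is a standard fact that almost surely a random path eventually enters some $C_i$ and visits every state of $C_i$ infinitely often. Symmetrically, the set $\Gcal$ of paths that eventually enter some $C_i$ and traverse every edge of $C_i$ infinitely often is a dense $G_\delta$ subset of $\paths(S, E_\calP, s)$: density holds because any finite prefix can be extended, through finitely many edges in $(S, E_\calP)$, to reach any chosen BSCC and then to traverse any prescribed edge inside it; the $G_\delta$ property holds because ``traversing edge $e$ after time $n$'' is open. Hence both the measure-one paths and the co-meager paths are concentrated on $\Gcal$.

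Finally, to handle the $\omega$-regular structure of $\Phi$ I would take the synchronous product of $\calP$ with a deterministic parity automaton recognising $\Phi$; the product is again a finite Markov chain, and membership in $\Phi$ becomes a parity condition on the product run. Within a product BSCC $D$, every state of $D$ is visited infinitely often along any $\Gcal$-path of the product, so the parity condition is decided uniformly by the maximum priority appearing in $D$. Thus each reachable product BSCC is either \emph{accepting} or \emph{rejecting}, and both ``$\Phi$ has measure one'' and ``$\Phi$ is co-meager'' reduce to the same combinatorial criterion: every reachable BSCC of the product is accepting. The hard part will be to verify that this single criterion simultaneously characterises the measure-theoretic and the topological notion. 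In one direction one uses the Markov convergence theorem, in the other a Baire-category argument, and one must check that each reachable non-accepting product BSCC contributes both positive probability and a dense open family of paths avoiding $\Phi$, which is where $\omega$-regularity is essential.
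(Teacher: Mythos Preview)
The paper does not give its own proof of Theorem~\ref{thm:MCBM}: the result is quoted from Varacca--V\"olzer~\cite{VV12} (see also Staiger~\cite{Staiger97}) and is used later only as a black box in the proof of Theorem~\ref{thm:trsfrt}. So there is nothing in the paper to compare your argument against.

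That said, your outline is essentially the argument of~\cite{VV12}: invoke the classical Banach--Mazur characterisation (Banach wins iff the objective is co-meager), take the synchronous product with a deterministic parity automaton, and reduce both the measure-one question and the co-meagerness question to the same combinatorial criterion on the reachable BSCCs of the product chain. One small imprecision worth fixing: in your density argument you write that ``any finite prefix can be extended \ldots\ to reach any chosen BSCC''. This is false in general, since from a given state only some BSCCs may be reachable. What you actually need---and what suffices for density of your set $\Gcal$---is that any finite prefix can be extended to reach \emph{some} reachable BSCC and then cycle through all of its edges; this is always possible because every state has a path to some BSCC, and within a BSCC every edge is reachable from every state. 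With that correction the sketch is sound.
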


\subsection{Parity games}

\begin{definition}
  Let $\calG$ be a game, and $\chi:S\to C$ be a priority function where
  $C \subseteq \bbN$. The parity objective $\parity$ is given by the
  following set
  \[
    \parity = \set{ s_0s_1s_2\cdots \in S^\omega \mid
      \limsup(\chi(s_0)\chi(s_1)\chi(s_2)\cdots) \text{ is even} } \enspace.
  \]
\end{definition}
In the sequel, we call games equipped with parity 
objectives \emph{parity games}.

In the setting of stochastic parity games, a natural question to ask
is whether there exists a strategy that ensures the parity objective
with probability 1? Formally,

\begin{problem}[Almost-sure Parity]
  Given a stochastic parity game with initial state $s$, compute a
  strategy $\sigma$ if it exists such that
  \[
    \forall \tau,~
    \prob{s}^{\sigma,\tau}(\parity) = 1 \enspace.
  \]
\end{problem}

A strategy for a player is \emph{positional} (or
\emph{memoryless}) if the choices depend only on the last state of the
current play. Formally, a strategy is positional if it  defines a mapping $\sigma : S \to A$.
An instrumental result in our subsequent development is the following
theorem due to~\cite{Zielonka04, CJH03}.

\begin{theorem}[Positional determinacy]
  \label{thm:stocPos}
  For finite stochastic games with parity objectives, from every state
  $s$, either $\e$ has a positional almost-surely winning strategy, or $\a$ has a positional positively winning strategy.
\end{theorem}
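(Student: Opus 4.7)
The plan is to proceed by induction on the lexicographic pair $(|S|, |C|)$, following Zielonka's approach adapted to the stochastic setting as in \cite{Zielonka04, CJH03}. The base case, $|C|=1$, is immediate: if the unique priority is even, any positional strategy of $\e$ wins almost-surely, and if it is odd, any positional strategy of $\a$ wins positively.

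For the inductive step, let $d=\max C$ and let $U=\chi^{-1}(d)$. Assume without loss of generality that $d$ is even; the odd case is symmetric, swapping the roles of $\e$ and $\a$. First I would compute Eve's almost-sure attractor $\wattr_\e(U)$, defined as the set of states from which $\e$ has a positional strategy forcing a visit to $U$ with probability one against any strategy of $\a$. The positionality of the witness is a well-known property of qualitative reachability in finite stochastic games, obtained by a standard greatest-fixpoint construction.

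Next, consider the sub-game $\calG'$ induced on $S' = S \setminus \wattr_\e(U)$. The set $S'$ is a trap for $\e$: $\a$ can always keep the play inside $S'$, since otherwise the escape states would themselves lie in $\wattr_\e(U)$. Since $d \notin \chi(S')$, the sub-game $\calG'$ has strictly fewer priorities, so the inductive hypothesis yields a positional partition of $S'$ into $W'_\e$ (where $\e$ wins almost-surely in $\calG'$) and $W'_\a$ (where $\a$ wins positively in $\calG'$). If $W'_\a$ is empty, combining $\e$'s sub-game strategy on $S'$ with her attractor strategy on $\wattr_\e(U)$ yields a global positional strategy: almost every play either eventually stays in $S'$ (and wins by induction) or visits $U$ infinitely often (and wins since $d$ is even). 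If $W'_\a$ is nonempty, I would then compute the positive attractor $W=\wattr_\a(W'_\a)$ in the full game; on $W$, $\a$ wins positively with a positional strategy obtained by first attracting to $W'_\a$ and then following his sub-game strategy. The complement $S \setminus W$ is a trap for $\a$ with strictly fewer states than $S$, so the inductive hypothesis on $(|S|-|W|,|C|)$ gives a positional partition there, which we glue with the one on $W$ to obtain the desired global partition.

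The main obstacle will be verifying that the combined positional strategy of $\e$ really is almost-surely winning, because one must switch between the attractor strategy and the sub-game strategy without destroying the almost-sure reachability of $U$. The standard remedy, which I would adopt, is an end-component analysis: under any pair of positional strategies, almost every play eventually settles in a bottom strongly connected component of the induced finite Markov chain, and one shows that under $\e$'s combined strategy every such component either is contained in $W'_\e$ (winning by induction on the sub-game) or intersects $U$, so that the maximum priority $d$ is visited infinitely often almost-surely. Once this is established, determinacy and positionality of the partition follow by assembling the three pieces produced by the induction.
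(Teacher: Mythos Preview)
The paper does not give its own proof of Theorem~\ref{thm:stocPos}; it is quoted from \cite{Zielonka04, CJH03} as a known result and then used as an ingredient in the proof of Theorem~\ref{thm:trsfrt}. So there is nothing in the paper to compare your attempt against.

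That said, your sketch is a faithful outline of the Zielonka-style argument in those references, and the global architecture (lexicographic induction on $(|S|,|C|)$, attractor, subgame, recursion on the complement of $\a$'s positive attractor, gluing) is sound. One terminological wobble worth fixing: what you call ``Eve's almost-sure attractor'' should, for the argument to work as you describe it, be the \emph{positive} attractor --- the set of states from which $\e$ can force a visit to $U$ with positive probability (equivalently, from which some finite path to $U$ exists that $\e$ can enforce against $\a$'s action choices with nature cooperating). It is the complement of \emph{that} set which is a genuine trap for $\e$, in the sense that $\a$ has a positional strategy keeping the play \emph{surely} inside $S'$, so that $\calG'$ is a well-defined subgame with one fewer priority. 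If you take the almost-sure attractor instead, its complement is only avoided with positive probability by $\a$, and you do not get a closed subgame to recurse on. With this correction, the remainder of your plan --- the end-component analysis showing that under $\e$'s combined positional strategy every bottom SCC of the induced Markov chain either lies in $W'_\e$ or meets $U$, and the secondary recursion on $S\setminus W$ --- is the standard route and goes through.
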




\section{Two-player $\bm$ games}
\label{sec:eabm}
In this section we present the proposal for a notion of four-player games, that correspond to two-player stochastic games, in a formal sense that we show below. 

\subsection{The game and the plays}
\paragraph{Arenas}
A \emph{two-player $\bm$ game} is a tuple $\calG=(S,(S_{\e},S_{\a}),A,\bm)$
where $S$ is a finite set of states, $(S_{\e},S_{\a})$ is a partition
of $S$, $A$ is a
finite nonempty set of actions, and $\bm \subseteq S\times A\times S$ is
a transition relation. We will assume, for simplicity, that for every $s\in S$ and every $a\in A$ there exists at least a state $s'$ for which $(s,a,s')\in\bm$.

The game is played by four players: the \emph{Arena players} $\e$
(Eve) and $\a$ (Adam), and the \emph{Nature players} $\b$ (Banach) and $\m$ (Mazur).

$S_{\e}$ is the set of states controlled by Eve while
$S_{\a}$ is the set of states controlled by Adam.
When it is her turn to play, in a state $s$, $\e$ chooses an action
$a$.  Similarly for $\a$. 
Once an Arena player has made his or her choice, it is the turn of
the Nature players to  choose a next state according to the transition
relation. 
They can also choose to pass their turn. 
Formally, after $\e$ (or $\a$) has played in state $s$ choosing an
action $a$, $\b$ chooses a state $s'$ such that $(s,a,s')\in\bm$. He
can also choose a special action $\bot$ that means that he is passing
the turn. In such a case it is immediately $\m$ that has to make a
choice. (And dually exchanging the roles of the Nature players). 
$\b$ and $\m$ must pass their turn at some point during a play. At the
beginning of the play, it is always $\m$'s turn.

Next we formalise these intuitions.

\paragraph{Winning plays}
A (legal) \emph{play} is an infinite word in
$S((AS)^*A\bot(AS))^\omega$. 
It must contain infinitely many occurrences of $\bot$, and no adjacent
occurrences of $\bot$, We denote the set of plays in a game $\calG$ by
$\plays(\calG)$. 
A \emph{flattening} of a play is obtained by
projecting the play on $S^\omega$, forgetting the occurrences of $A$
and of $\bot$. The flattening of a play is always an infinite word.
An \emph{objective} is a set of infinite words
$\Phi\subseteq S^\omega$.

\begin{definition}
A play $w$ on a game $\calG$ with objective $\Phi$ is \emph{winning} for $\e$ and $\b$ if 
its flattening belongs to the objective. Otherwise the play is winning
for $\a$ and $\m$.
When clear from the context, we will say that the
  play is winning for $\e$ (resp. $\a$).
\end{definition}

\subsection{Turn-based Strategies}
In order to define strategies, we first need to figure out which
information the players are allowed to have. Nature players can try to
help their allied (Banach helps Eve, while Mazur helps Adam) but Arena
players must not know against which one of the Nature players they are
playing at any given moment. We will explain why at the end of the
section (c.f. Example~\ref{ex:determinacy}).
Therefore, Arena players are not allowed to see the occurrences of
$\bot$ in choosing what move to make, and strategies of Arena players
are defined precisely as in stochastic games. 

Strategies for $\b$ and $\m$ are mappings from partial plays ending in
$A\cup \bot$, choosing either an element in $S$ or $\bot$.
Formally a strategy  for $\b$ is a function $\bsigma:S((AS)^*A\bot(AS))^*A(\bot?)\to S\cup\bot$ such that:
\begin{enumerate}
\item if $\bsigma(wsa)=s'$ then $(s,a,s')\in\bm$
\item if $\bsigma(wsa\bot)=s'$ then $(s,a,s')\in\bm$
\item $\bsigma(wsa\bot)\neq\bot$
\item there cannot be an infinite sequence $ws_0a_0s_1a_1\ldots$ such
  that $\forall i\ge 0,~\bsigma(ws_0a_0\ldots s_ia_i)=s_{i+1}$
\end{enumerate} 
Similarly for $\mtau$.
Condition (3) insures that strategies always produce infinite plays. It corresponds the property called
\emph{progressiveness} by Varacca and Völzer~\cite{VV12}.
Condition (4) insures that a Nature player must eventually play $\bot$.

Once all the players have chosen their strategies, a play is
obtained. 
In order to give the formal definition, we need the
following notation:
\begin{itemize}
\item given a finite sequence $w$, we denote by $\last(w)$ its last
  element;
\item given a finite sequence $w$ possibly containing occurrences of
  $\bot$, we denote by $\pi_\bot(w)$ the sequence obtained by
  eliminating these occurrences.
\end{itemize}

Formally, given a play $\tilde{w}$ in $\plays(\calG)$ we say that it respects
the strategies $\sigma$ for $\e$, $\tau$ of $\a$, $\bsigma$ for $\b$,
$\mtau$ for $\m$ if for every finite prefix $w$ of $\tilde{w}$:

 \begin{enumerate}
 \item if $\last(w)\in S_{\e}$ and $\sigma(\pi_\bot(w))=a$ then $wa$
   is a prefix of $\tilde{w}$
 \item if $\last(w)\in S_{\a}$ and $\tau(\pi_\bot(w))=a$ then $wa$ is
   a prefix of $\tilde{w}$
 \item if $\last(w)\in A\cup \bot$, $w$ contains an odd number of
   occurrences of $\bot$, and $\bsigma(w)=x$, then $wx$ is a prefix of
   $\tilde{w}$.
 \item if $\last(w)\in A\cup \bot$, $w$ contains an even number of
   occurrences of $\bot$, and $\mtau(w)=x$, then $wx$ is a prefix of
   $\tilde{w}$.
 \end{enumerate}

 \begin{remark}
   Given four strategies for the four players, and an initial state
   $s$, there is a unique play of the game $\calG$ beginning in
   $s$ that respects them that we call the induced play.
 \end{remark}

 \begin{definition}
   We say that $\e$ and $\b$ \emph{win structurally} from a state
   $s$, if there exists a strategy $\sigma$ for $\e$ such that for any
   strategy $\tau$ of $\a$ there exists a strategy $\bsigma$ for $\b$
   such that for any strategy $\mtau$ for $\m$ the induced play
   starting from $s$ is winning for $\e$ and $\b$.
 \end{definition}

\subsection{Global Strategies} 
We have presented this turn based way
of playing the game, as it is intuitive. However it is hard to work
with it, as the definition of strategy for the Nature Players is quite
involved. We propose here an alternative point of view of the
strategies for Banach and Mazur, that is equivalent, but more suitable
for mathematical proofs.

The intuition is that we let Adam and Eve play their strategies first
in order to generate a \emph{residual tree}. Then Banach and Mazur
play their game as usual. Formally, given a strategy $\sigma$ for $\e$
and a strategy $\tau$ of $\a$, and a state $s$ we build a subset
$\calG_f(\sigma,\tau)$ of $S(AS)^*$ as follows:
\begin{itemize}
\item the initial state $s$ is in $\calG_f(\sigma,\tau)$
\item if $ws$ is in $\Gcal_f(\sigma,\tau)$ and $s\in S_\e$, and
  $ \sigma(ws)=a$, then for all $s'$ such that $(s,a,s')\in\bm$, we
  have that $wsas'$ is in $\Gcal_f(\sigma,\tau)$
\item if $ws$ is in $\Gcal_f(\sigma,\tau)$ and $s\in S_\a$, and
  $ \tau(ws)=a$, then for all $s'$ such that $(s,a,s')\in\bm$, we have
  that $wsas'$ is in $\Gcal_f(\sigma,\tau)$
\end{itemize}

The residual tree $\Gcal(\sigma,\tau)$ is the set of infinite words obtained as limits of sequences in $\Gcal_f(\sigma,\tau)$.

\begin{definition}
  We say that $\e$ and $\b$ \emph{win strategically} from a state $s$
  if from $s$ there exists a strategy $\sigma$ for $\e$ and such that
  for any strategy $\tau$ of $\a$, Banach wins the Banach-Mazur game
  on the residual tree $\Gcal(\sigma,\tau)$.
\end{definition}


\begin{theorem}
  $\e$ and $\b$ win structurally if and only if they win strategically.
\end{theorem}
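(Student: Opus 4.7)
The plan is to observe that, once the arena strategies $\sigma$ and $\tau$ are fixed, both notions of winning reduce to a Banach--Mazur-style alternating-extension game on the residual tree $\Gcal(\sigma,\tau)$, and that the two formulations are equivalent via an explicit simulation. More precisely, under fixed $\sigma$ and $\tau$, the arena actions along any play become deterministic functions of the state sequence, so a structural turn by $\b$ (resp.\ $\m$) consists of extending the current finite prefix of $\Gcal_f(\sigma,\tau)$ by a non-empty, finite sequence of states before playing $\bot$: condition~(3) in the definition of strategies forces at least one state per turn, and condition~(4) forces the turn to terminate. This is exactly the move structure of the Banach--Mazur game on $\Gcal(\sigma,\tau)$, and the winning condition matches, since the flattening of the structural play coincides with the limit of the BM play (the $\bot$ symbols and intermediate actions are either erased or already present in both formulations). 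So it suffices, for each fixed pair $(\sigma,\tau)$, to establish a bijection (at the level of strategies for $\b$ and $\m$) that preserves induced plays.

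For the forward direction, I fix the Eve strategy $\sigma$ provided by structural winning. Given any Adam strategy $\tau$, the structural Banach witness $\bsigma$ is converted into a BM strategy $\bsigma^{BM}$ by bundling: given a BM history $u = s_0 w_0 w_1 \cdots w_{2k}$ ending with a Mazur move, I simulate the structural play in which $\m$ plays out each $w_{2i}$ state-by-state and then $\bot$, while $\b$ plays according to $\bsigma$; condition~(4) guarantees that after the segment corresponding to $w_{2k}$, $\bsigma$ produces a non-empty finite sequence of states before its next $\bot$, and I take this sequence to be $\bsigma^{BM}(u)$. Conversely, every BM strategy $\mtau^{BM}$ for Mazur determines a structural strategy $\mtau$ that plays out each $\mtau^{BM}$-move state-by-state, then $\bot$; the induced structural play against $(\sigma,\tau,\bsigma,\mtau)$ has flattening equal to the limit of the corresponding BM play, so the structural victory transfers to a BM victory for $\bsigma^{BM}$ against $\mtau^{BM}$.

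The reverse direction is symmetric: a winning BM strategy on $\Gcal(\sigma,\tau)$ is turned into a structural $\bsigma$ that, after each $\bot$ from $\m$, plays out the BM response one state at a time and then plays $\bot$; conditions~(3) and~(4) are satisfied because each BM move is finite and non-empty. The main obstacle is purely notational: the structural histories carry the explicit $\bot$ markers and the interleaving of arena actions, whereas BM histories are linearly ordered increasing sequences of finite prefixes, and one must verify that the simulation commutes with both formulations of ``induced play''. The one point deserving care is that $\bsigma$ may depend on the exact segmentation of $\m$'s moves, but since in the forward direction we always feed $\bsigma$ the \emph{canonical} segmentation (``states of $w_{2i}$ one at a time, then $\bot$''), the bundling is well-defined and the translation between the two kinds of strategies is consistent.
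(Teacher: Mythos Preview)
Your proposal is correct and follows the same approach the paper sketches; indeed, the paper's own ``proof'' is a two-sentence remark (extract a BM strategy from $\bsigma$, and conversely define $\bsigma$ arbitrarily on branches outside the residual tree), and you have simply carried out that sketch in detail, including the bookkeeping on segmentation that the paper leaves implicit. One small point the paper makes explicit and you leave tacit: in the reverse direction, the structural $\bsigma$ must be defined on \emph{all} histories, not only those in $\Gcal_f(\sigma,\tau)$, so you should add that on histories outside the residual tree $\bsigma$ is chosen arbitrarily (this is harmless since the induced play against $\sigma,\tau$ never leaves the residual tree).
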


From the winning strategy $\bsigma$ for $\b$ it is very easy to
extract the winning strategy in the Banach-Mazur Game. Conversely,
given a winning strategy on the residual tree, we can define several
$\bsigma$, by just choosing as we want in the branches not belonging
to the residual tree.

\subsection{Determinacy and positionality}
If Eve and Banach do not win, it means that for each strategy of Eve there is a winning counterstrategy of Adam. 
A stronger case is when Adam has one strategy that wins against all strategies of Eve.

\begin{definition}
  We say that $\a$ and $\m$ \emph{win strategically} from a state $s$
  if from $s$ there exists a strategy $\tau$ for $\a$ and such that
  for any strategy $\sigma$ of $\e$, Banach wins the Banach-Mazur game
  on the residual tree $\Gcal(\sigma,\tau)$.
\end{definition}

\begin{definition}[determinacy]
  A game is \emph{determined} if from every state $s$, either $\e$ and
  $\b$ win strategically, or $\a$ and $\m$ win strategically. 
\end{definition}

Positional strategies can be seen as a
"pruning" of the arena of the game, by removing all the actions that have
not been chosen.

Formally: given two positional strategies $\sigma$ for $\e$ and $\tau$
of $\a$ we build a directed graph on $S$, that we also call
$\calG(\sigma,\tau)$ as follows. For each state $s$:
\begin{itemize}
\item if $s\in S_\e$, and $ \sigma(s)=a$ then for all $s'$ such that
  $(s,a,s')\in\bm$, we have an edge from $s$ to $s'$
  $\Gcal(\sigma,\tau)$
\item if $s\in S_\a$, and $ \tau(s)=a$ then for all $s'$ such that
  $(s,a,s')\in\bm$, we have an edge from $s$ to $s'$ in
  $\Gcal(\sigma,\tau)$
\end{itemize}


In the next section we are going to show the main technical result of
this paper: that a finite $\bm$ game with parity objective is
positionally determined. This means that either $\e$ and $\b$ win
strategically with a positional strategy for $\e$, or $\a$ and $\m$
win strategically with  a positional strategy for $\a$.

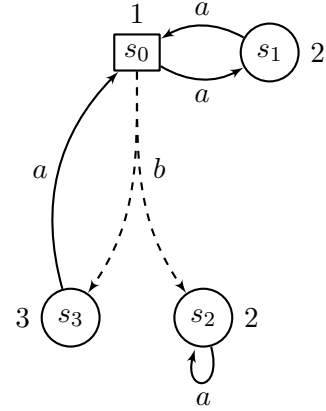
\begin{wrapfigure}{r}{0.4\textwidth}
  \begin{center}
    \begin{tikzpicture}[>=latex', join=bevel, thick]
      \node[] (0) at (0bp, 0bp) [draw, rectangle]{$s_0$}; 
      \node [above] at (0.north) {$1$};
      \node[] (1) at (50bp, 0bp) [draw, circle] {$s_1$};
      \node [right] at (1.east) {$2$};
      \node[] (4) at (25bp, -100bp) [draw, circle] {$s_2$};
      \node [right] at (4.east) {$2$};
      \node[] (5) at (-25bp, -100bp) [draw, circle] {$s_3$};
      \node [left] at (5.west) {$3$};
      \draw[->] (0) [bend right] to node [below] {$a$} (1); 
      \draw[->] (1) [bend right] to node [above] {$a$} (0); 
      \path[->, dashed] (0) edge [out=-90, in=125] node [above right] {$b$}
      (4); 
      \path[->, dashed] (0) edge [out=-90, in=55] node [above right] {} (5);
      \draw[->] (4) edge [loop below] node [below] {$a$} (1);
      \draw[->] (5) [bend left] to node [left] {$a$} (0); 
    \end{tikzpicture}
    \caption{\label{fig:determinacy} A two-player $\bm$ game. The
      relation $\bm$ is defined by the edges of the graph. The dashed
      edges are the one where $\b$ and $\m$ can make important choices}
  \end{center}
\end{wrapfigure}

\subsection{Don't let your right hand know what your left hand is doing}
The reader may wonder if the complex alternation of quantifier in the
definition of structural win is necessary. She may have preferred the
following definition:

We say that $\e$ and $\b$ win from a state $s$ if from $s$ there
exists a strategy $\sigma$ for $\e$ and a strategy $\bsigma$ for $\b$
such that against any strategy $\tau$ of $\a$ and any strategy $\mtau$
for $\m$ the induced pre-play is a play whose flattening is in $\Phi$.

If we considered this definition,
we would give too much power to
$\a$. $\a$ and $\m$ could join forces and win. Indeed, they could take
advantage of the knowledge of $\b$'s strategy, to infer which one of
the Nature players has the lead.  These ideas are presented in the
following example.

\begin{example}
  \label{ex:determinacy}
  Consider the arena depicted in Figure~\ref{fig:determinacy} (which
  is essentially taken from \cite{ACV10}). The only state where there
  is a choice for an Arena player is $s_0$ and we suppose it belongs
  to $\a$. If in this state $\a$ always chooses to go visit $s_1$, he
  will lose. His only chance to win is to move the play in the bottom
  component of the arena. He does this by playing action $b$ when in
  $S_0$ from time to time.  
  if he was playing against a randomised
  nature, then he will almost-surely lose as
  he cannot always avoid state $s_2$.

  However, if we suppose both $\a$ and $m$ know the strategy of $\b$
  i.e. they chose their strategy after $\b$, they can agree on a joint
  strategy as follows: $\a$ chooses to visit $s_1$ as long as $\b$ is
  playing. Once $\b$ passes his turn by playing $\bot$, immediately $\a$
  chooses the action $b$ and then $\m$ helps him by choosing
  $s_3$. Thus, the play would visit $s_3$ infinitely often and $\a$
  would win. Notice that $\a$ needs an infinite memory to implement
  this strategy. In particular, he needs to remember the entire
  history so he can know when does $\b$ passes his turn.

  This shows that with the alternative definition, we cannot have a
  correspondence between the two notions of the game.
\end{example}




\section{Main Theorem}
In this section we prove the main result of the paper i.e.  a transfer
theorem between a stochastic game and the $\bm$ game induced by the following definition.

\begin{definition}
  \label{def:StocToBM}
  Given a two-player stochastic parity games $\calG$, we define
  $\bar{\calG}$ to be the two-player $\bm$ game induced by $\calG$ by
  simply defining the relation $\bm(s,a)$ as $\supp(\calP(s,a))$ for
  any state $s$ and action $a$.
\end{definition}

\subsection{A transfer theorem}

\begin{theorem}[Transfer theorem]\label{thm:trsfrt}
  On a finite arena with parity objective, Eve wins almost-surely the
  stochastic game, if and only if Eve wins the two-player parity $\bm$
  game induced.
\end{theorem}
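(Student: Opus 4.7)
The plan is to bridge the two games via Theorem~\ref{thm:MCBM} (the Markov chain / Banach-Mazur correspondence), using positional determinacy on both sides: Theorem~\ref{thm:stocPos} for $\calG$ and Theorem~\ref{thm:PosDet} for $\bar{\calG}$. The key observation enabling the reduction is that when both $\sigma$ (for Eve) and $\tau$ (for Adam) are positional, the stochastic game $\calG$ collapses to a finite Markov chain $M_{\sigma,\tau}$ on the state space $S$, and—precisely because $\bm(s,a)$ was defined to be $\supp(\calP(s,a))$ in Definition~\ref{def:StocToBM}—the underlying graph of $M_{\sigma,\tau}$ is exactly the directed graph whose infinite paths constitute the residual tree $\Gcal(\sigma,\tau)$.

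For the ``only if'' direction, assume Eve wins almost-surely in $\calG$; Theorem~\ref{thm:stocPos} then supplies a positional witness $\sigma$. Suppose towards a contradiction that Eve does not win strategically in $\bar{\calG}$. Theorem~\ref{thm:PosDet} then gives a positional $\tau$ for Adam with the property that, for every Eve strategy and in particular for $\sigma$, Mazur wins the Banach-Mazur game on $\Gcal(\sigma,\tau)$ with objective $\parity$. On the Markov chain $M_{\sigma,\tau}$, however, we have $\prob{s}^{\sigma,\tau}(\parity)=1$, so Theorem~\ref{thm:MCBM} forces Banach to win that very BM game—a contradiction.

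The ``if'' direction is symmetric. Assume Eve wins strategically in $\bar{\calG}$ and take, via Theorem~\ref{thm:PosDet}, a positional witness $\sigma$. If Eve were not almost-surely winning in $\calG$, then Theorem~\ref{thm:stocPos} would hand us a positional strategy $\tau$ for Adam with $\prob{s}^{\sigma,\tau}(\parity)<1$. Theorem~\ref{thm:MCBM} applied in the contrapositive to $M_{\sigma,\tau}$ then says that Banach does not win the BM game on $\Gcal(\sigma,\tau)$. Since $\omega$-regular BM games are Borel and hence determined, Mazur wins, contradicting the strategic-win property of $\sigma$.

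The substantive obstacle is really absorbed by Theorem~\ref{thm:PosDet}: once positional determinacy of $\bm$ parity games is granted, the transfer theorem reduces to matching a positional $\sigma$ with a positional $\tau$ in a finite Markov chain and invoking Theorem~\ref{thm:MCBM}. The only routine check is the identification of ``the graph of $M_{\sigma,\tau}$'' with ``the arena whose infinite paths form $\Gcal(\sigma,\tau)$'' for positional $(\sigma,\tau)$, which is immediate from Definition~\ref{def:StocToBM} together with the construction of $\Gcal_f(\sigma,\tau)$.
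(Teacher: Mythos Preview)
Your proof is correct and follows essentially the same approach as the paper: both directions proceed by contradiction, invoking positional determinacy on each side (Theorems~\ref{thm:stocPos} and~\ref{thm:PosDet}) to obtain a positional pair $(\sigma,\tau)$, and then applying the Markov-chain/Banach-Mazur correspondence (Theorem~\ref{thm:MCBM}) to the resulting finite Markov chain. The only cosmetic difference is that where you fix the specific positional witness $\sigma$ and derive a contradiction against it, the paper phrases the argument as defeating \emph{all} positional strategies of Eve; and you make explicit the appeal to determinacy of the underlying Banach-Mazur game, which the paper leaves implicit in its citation of~\cite{VV12}.
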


The proof of the above theorem uses the following theorem, interesting in itself.

\begin{theorem}\label{thm:PosDet}
  Two-player Parity $\bm$ games are positionaly determined.
\end{theorem}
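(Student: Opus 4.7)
The plan is to proceed by induction on the size of the arena $|S|$, following Zielonka's fixpoint method adapted to the Banach-Mazur setting. Let $d$ be the maximum priority appearing in $\calG$; by swapping player roles if necessary, assume $d$ is even, and set $U = \chi^{-1}(d)$. The goal is to exhibit a positional partition $S = X \sqcup Y$ together with positional winning strategies $\sigma$ for $\e$ on $X$ (so that for every $\tau$, $\b$ wins the $\bm$ game on $\Gcal(\sigma, \tau)$) and $\tau$ for $\a$ on $Y$ (so that for every $\sigma$, $\m$ wins on $\Gcal(\sigma, \tau)$).

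The first ingredient is a suitable notion of attractor: define $\wattr_{\e\b}(U)$ as the least set $W \supseteq U$ such that $s \in S_{\e}$ is added whenever $\e$ has an action $a$ with every $\bm$-successor of $(s,a)$ lying in $W$, and $s \in S_{\a}$ is added whenever for every action $a$, every $\bm$-successor of $(s,a)$ lies in $W$. This ``safety-style'' attractor both traps the play inside $W_1 := \wattr_{\e\b}(U)$ under any $\a$-response (so $\m$ cannot escape) and, thanks to its rank structure, provides $\b$ with a bounded-length path to $U$ from every state of $W_1$. The resulting positional $\e$-strategy $\sigma^\star$ on $W_1$ ensures that in the residual tree $\Gcal(\sigma^\star, \tau)$ restricted to $W_1$, every reachable node has a descendant in $U$. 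The dual attractor $\wattr_{\a\m}$ for the opposing coalition is defined symmetrically.

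For the inductive step, consider the sub-arena $\calG' = \calG \setminus W_1$, which is well-formed by the complementary trap property of attractors. Its maximum priority is strictly below $d$ and $|\calG'| < |\calG|$ since $U$ is nonempty, so the induction hypothesis applies to give a partition $\calG' = X' \sqcup Y'$ with positional winning strategies $\sigma'$ for $\e$ on $X'$ and $\tau'$ for $\a$ on $Y'$. If $Y' = \emptyset$, the combined positional $\e$-strategy $\sigma^\star \cup \sigma'$ wins on all of $\calG$: any residual tree decomposes into a ``$W_1$-part'' in which $\b$ drives the play to $U$ during his segments while the safety property keeps $\m$ trapped, and an ``$X'$-part'' in which $\b$ wins by the induction hypothesis; since $d$ is even and maximal, any play hitting $U$ infinitely often is won by $\b$. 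If $Y' \neq \emptyset$, form $W_2 := \wattr_{\a\m}(Y')$ in the full arena $\calG$, recurse on $\calG \setminus W_2$ (strictly smaller), and glue the attractor and inductive strategies in the standard way, obtaining that $\a$'s positional winning region contains $W_2$ and $\e$'s is the complement.

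The main obstacle is justifying that the attractor's induced $\e$-strategy genuinely gives $\b$ a winning strategy on the trap $W_1$ and that the gluing is correct. One has to establish a structural fact: on a finite residual tree, $\b$ wins the $\bm$ parity game iff every reachable bottom SCC of the tree has even maximum priority. This fact follows from $\b$'s ability to postpone passing until reaching any desired target combined with the mandatory infinite alternation of passes, and it bridges the ``rank-based'' view used in the attractor with the ``limsup-of-priorities'' winning condition. A further subtlety in the gluing step is that $\m$'s segments may in principle cause the play to cross between $W_1$ and $X'$; one must verify that the combined $\e$-strategy prevents such crossings or, when they occur, turns them to $\b$'s advantage. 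This follows from the trap property of $W_1$ (preventing $\m$ from escaping it) and the inductive construction on $\calG'$ (no $\e$-transitions back into $W_1$ within $\calG'$), which together ensure that the two regions are analysed independently in the residual tree.
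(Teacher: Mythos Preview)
Your overall shape (a Zielonka-style induction) is the same as the paper's, but the attractor you introduce is the wrong one for this setting, and that breaks the recursion.

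You define $\wattr_{\e\b}(U)$ by the rules ``$s\in S_\e$: some action has \emph{all} $\bm$-successors in $W$'' and ``$s\in S_\a$: every action has \emph{all} $\bm$-successors in $W$''. This is a safety attractor: it yields a set in which $\e$ can trap the play regardless of which Nature player is moving. The problem is the complement. Take $s\in S_\a\setminus W_1$: all you get is that \emph{some} action from $s$ has \emph{some} successor outside $W_1$. That is not enough for $\calG\setminus W_1$ to be a subgame in the sense needed for the induction (every state must have an action whose \emph{entire} successor set stays inside). Concretely, if $s\in S_\a$ has a single action $a$ with $\bm(s,a)=\{u,v\}$, $u\in U$ and $v\notin W_1$, then $s\notin W_1$ but no action from $s$ stays in the complement. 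So your sentence ``$\calG'=\calG\setminus W_1$, which is well-formed by the complementary trap property of attractors'' is false, and the inductive call has no arena to act on. The paper avoids this by using the Eve--Banach \emph{reachability} attractor (intersection nonempty, not inclusion): its complement is exactly a trap for $\e$ in the sense of Definition~\ref{lemma:trap}, hence a genuine subgame; the price is that $\m$ \emph{can} leave the attractor, which is why the paper's argument for the even case tracks the sets $\s(S_d,S)$ and $\us(S_d,S)$ rather than claiming the attractor is $\m$-closed.

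Two smaller points. First, the ``swap roles and assume $d$ even'' shortcut is not innocent here: the paper treats the even and odd top priorities with genuinely different decompositions (one trap versus an increasing chain of traps, Propositions~\ref{prop:even} and~\ref{prop:odd}), reflecting the asymmetry that $\m$ always opens the play. Second, your ``structural fact'' about bottom SCCs is stated for a \emph{finite} residual tree, i.e.\ both strategies positional; but to certify that your positional $\sigma$ is winning you must beat every $\tau$, and then $\calG(\sigma,\tau)$ is an infinite tree, so the SCC criterion does not apply directly. The paper handles this by arguing directly on the (possibly infinite) residual tree via the $\s/\us$ case split.
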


\begin{proof}[Proof of Theorem~\ref{thm:trsfrt}]
  Assume that $\e$ wins from some state $s$ in $\bar{\calG}$. By
  Theorem~\ref{thm:PosDet} there exists a positional winning strategy
  of $\e$. Suppose toward a contradiction that $\e$ does not have an
  almost-surely wining strategy in the Stochastic game $\calG$. By
  positional determinacy, there exists a positional strategy $\tau$
  that is positively winning for $\a$. Let us play this strategy
  against any positional strategy of $\e$.  Let $\sigma$ be a
  positional strategy for $\e$, the pair $(\sigma,\tau)$ induces a
  Markov chain where the objective of $\a$ is satisfied
  positively. Hence, in the residual tree
  $\bar{\calG}(\sigma,\tau)$ we have that $\m$ wins~\cite{VV12}. In
  particular, $\a$ and $\m$ have strategies to win against any
  positional strategy of $\e$ and any strategy of $\b$. Thus, we have
  concluded that all the positional strategies of $\e$ are not
  winning, contradiction.

  Let us prove the other direction. Assume that $\e$ wins
  almost-surely, then by Theorem~\ref{thm:stocPos} she has a positional
  almost-surely winning strategy.
  Suppose toward a contradiction that $\e$ and $\b$ do not win in the
  game $\bar{\calG}$. By positional determinacy, there exists a
  positional winning strategy $\tau$ for $\a$. Let $\sigma$ be a
  positional strategy for $\e$, in the residual tree
  $\bar{\calG}(\sigma,\tau)$, $\m$ wins and because both $\sigma$ and
  $\tau$ are positional, it follows that $\m$ wins in the Markov chain
  induced $\calG(\sigma,\tau)$. Using Theorem~\ref{thm:MCBM}, it follows that
  $\a$ wins positively against any positional strategy of $\e$, a contradiction.
\end{proof}

\subsection{Winning States}

We now turn our attention to the proof Theorem~\ref{thm:PosDet}.  In
order to prove this theorem, we introduce a ew technical tools.

We also introduce the notion of \emph{lead}. Intuitively, during a play of the game, we say that $\m$ (resp. $\b$)
has the lead if we are following the strategy of $\m$
(resp. $\b$). Formally,

\begin{definition}
  Let $\calG(\sigma, \tau)$ be a residual tree induced by the pair
  $(\sigma,\tau)$ and let $\rho$ be a finite play in
  $\calG(\sigma, \tau)$. Then, $\m$ has the lead along $\rho$ if
  $\rho$ contains an even number of occurrences of $\bot$. Otherwise
  $\b$ has the lead.
\end{definition}

An important structural notion is the one of subgames

\begin{definition}[Subgame]
  Let $Q$ be a subset of $S$, $Q$ induces a subgame $\calG(Q)$ if 
  \[
    \forall q \in Q,~\exists a \in A,~\bm(q,a) \subseteq Q
    \enspace.
  \]
\end{definition}

\begin{definition}[Attractor]
  The attractor to $U$ for Eve, denoted $\wattr_{\e}(U,S) \subseteq S$,
  is the limit of the following sequence:
  \begin{align*}
    \wattr_{\e}^0(U,S)=U
    \enspace,
  \end{align*}
  and for any $i\ge 0$
  \begin{align*}
    \wattr_{\e}^{i+1}(U,S) =\wattr_{\e}^i(U,S)&
                                                \cup
                                                \set{s\in S_{\e} \mid \exists a \in A,
                                                \bm(s,a) \cap
                                                \wattr_{\e}^i(U,S) \neq
                                                \emptyset}\\
                                              &\cup
                                                \set{s\in S_{\a} \mid \forall a \in A,
                                                \bm(s,a) \cap
                                                \wattr_{\e}^i(U,S) \ne
                                                \emptyset} \enspace.
  \end{align*}
\end{definition}

The set $\wattr_\a(U,S)$ is defined similarly for $\e$. The states of
$\wattr_\e(U,S)$ enjoy the following property:

\begin{proposition}
  Let $s$ be in $\wattr_\e(U,S)$, there exists a strategy $\sigma$ for
  $\e$ such that against any strategy $\tau$ for $\a$, $\b$ can reach
  $U$ in $\calG(\sigma,\tau)$ if he has the lead from $s$.
\end{proposition}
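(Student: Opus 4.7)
The plan is to proceed by induction on the attractor rank of $s$, that is, the least $i$ with $s\in\wattr_\e^i(U,S)$. Simultaneously I would construct a positional strategy $\sigma$ for Eve defined on the whole attractor, and, for each $\tau$, a strategy $\bsigma$ for Banach such that, whenever Banach holds the lead at a state of positive rank, his next move strictly decreases the rank while keeping the lead.

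For the base case $i=0$, we have $s\in U$ and the target is already reached, so nothing is required. For the inductive step, assume $s$ has rank $i+1$. If $s\in S_\e$, the definition of $\wattr_\e^{i+1}$ supplies an action $a$ with $\bm(s,a)\cap\wattr_\e^i(U,S)\neq\emptyset$, and we set $\sigma(s):=a$; since Banach still holds the lead after Eve plays $a$, $\bsigma$ selects some $s'\in\bm(s,a)\cap\wattr_\e^i(U,S)$ and does not pass, so the lead is retained. If $s\in S_\a$, the attractor definition guarantees that for \emph{every} action $a$ Adam might play, $\bm(s,a)\cap\wattr_\e^i(U,S)\neq\emptyset$, so again $\bsigma$ can pick a successor of strictly smaller rank without passing. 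In both cases control arrives at a state of strictly smaller rank with Banach still leading, and the inductive hypothesis applies to this state inside the residual tree $\Gcal(\sigma,\tau)$.

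Iterating this descent, the rank drops strictly after each round of (Eve or Adam)-then-Banach, so $U$ is visited within at most $i+1$ rounds, a finite number of steps. The strategy $\sigma$ just constructed depends only on the current state, so it is positional, as required, and $\bsigma$ can be defined independently of the particular $\tau$ by reading only the current state and the lead bit. The main subtlety I expect is the interplay with condition~(4) on Banach strategies: during the descent Banach refuses to pass, yet his strategy must eventually pass along every infinite play. This is harmless because the descent terminates after boundedly many steps, so $\bsigma$ may be extended arbitrarily afterwards, in particular so as to pass at the first opportunity once $U$ has been entered, ensuring the resulting $\bsigma$ is a well-formed Banach strategy in the sense of Section~\ref{sec:eabm}.
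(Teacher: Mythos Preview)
Your argument is correct: the rank-decreasing induction on the attractor levels is exactly the standard way to establish this kind of statement, and your handling of the two cases (Eve's states versus Adam's states) matches the definition of $\wattr_\e$ precisely. Your remark about condition~(4) is also well placed: since the descent lasts at most $i+1$ rounds, Banach's strategy can safely refrain from passing during that phase and then pass once $U$ is reached, so the resulting $\bsigma$ is indeed a legal Banach strategy.

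As for comparison with the paper: the paper does not actually prove this proposition. It is stated and immediately followed by the sentence ``Obviously, the same claim holds for $\a$ if stated accordingly'', so the authors treat it as a routine consequence of the attractor definition. Your write-up therefore supplies the omitted details rather than offering an alternative route; the positional nature of $\sigma$ that you extract along the way is also implicitly used later in the paper (in the proofs of Propositions~\ref{prop:even} and~\ref{prop:odd}), so making it explicit here is a genuine improvement in exposition.
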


Obviously, he same claim holds for $\a$ if stated accordingly.

\begin{definition}[Trap]
  \label{lemma:trap}
  A trap for $\a$ is a subset $Q$ of $S$, such that:
  \begin{align*}
    &\forall q \in Q\cap S_\a,~
      \forall a\in A,~\bm(q,a)  \subseteq Q
      \enspace,\\
    &\forall q \in Q\cap S_\e,~
      \exists a\in A,~\bm(q,a)  \subseteq Q
      \enspace.\\
  \end{align*}
\end{definition}
Intuitively, a trap for $\a$, $\calG(Q)$ is a subgame where $\e$ has a
strategy to force the play to never leave $Q$. We define traps for
$\e$ similarly.

\begin{lemma}
  The complement of $\wattr_{\e}(U,S)$ is a trap for $\e$.
\end{lemma}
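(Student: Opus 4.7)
The plan is to unfold the fixed-point definition of the attractor and read off the two trap conditions directly. Write $T = S \setminus \wattr_{\e}(U,S)$. Because $S$ is finite, the ascending chain $\wattr_{\e}^0(U,S) \subseteq \wattr_{\e}^1(U,S) \subseteq \cdots$ stabilises, so there is an index $n$ with $\wattr_{\e}(U,S) = \wattr_{\e}^n(U,S) = \wattr_{\e}^{n+1}(U,S)$. This stability is the only ingredient needed; everything else is just the negation of the two disjunctive clauses in the inductive step.

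For the first trap condition, I would take an arbitrary $q \in T \cap S_{\e}$. Since $q \notin \wattr_{\e}^{n+1}(U,S)$, the $S_{\e}$-clause in the recursive definition of the attractor must have failed at $q$: there is no action $a$ with $\bm(q,a) \cap \wattr_{\e}^n(U,S) \neq \emptyset$. Equivalently, for every action $a$ we have $\bm(q,a) \subseteq T$. This is exactly the condition demanded of Eve's states by the definition of a trap for $\e$, obtained from Definition~\ref{lemma:trap} by swapping the roles of the two arena players.

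For the second trap condition, I would take an arbitrary $q \in T \cap S_{\a}$ and argue symmetrically with the $S_{\a}$-clause. Since $q \notin \wattr_{\e}^{n+1}(U,S)$, it is not true that \emph{every} action $a$ satisfies $\bm(q,a) \cap \wattr_{\e}^n(U,S) \neq \emptyset$; hence some action $a$ gives $\bm(q,a) \subseteq T$, which is the required Adam-side clause of a trap for $\e$.

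There is no genuine obstacle: the whole argument amounts to the observation that a fixed point of the attractor operator cannot grow one more step, and the Eve/Adam asymmetry between the two clauses of the attractor definition (existential for $S_{\e}$, universal for $S_{\a}$) is precisely what produces the universal-versus-existential asymmetry in the definition of a trap. The complete proof fits in a few lines and requires no induction beyond the finiteness of $S$.
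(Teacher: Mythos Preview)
Your argument is correct and is the standard unfolding of the fixed-point definition; the paper itself states this lemma without proof, treating it as a well-known fact. There is therefore nothing to compare: you have supplied exactly the elementary verification that the paper omits.
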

That is if we let $V$ be the set $S\setminus \wattr_{\e}(U,S)$, then
$\e$ is trapped away from $U$ in $\calG(V)$. We will denote such a
subgame by $\away_\e(U,S)$.

Let $U$ be a subset of $U$ and $(\sigma,\tau)$ a couple of strategies, we define the set $\s(U,S)$
For a subset of states $U$, we define the set 
$\s(U,S)$ as follows:
\[
  \s(U,S)=
  \set{
    s \in S\mid
    \exists \rho \in \paths(\calG(\sigma,\tau)),~
    \first(\rho) = s \land
    \last(\rho) \in \wattr(U,S)
  }
  \enspace,
\]
where $\first(\rho)$ is the first element of the path $\rho$. Intuitively, this is the set of states in $S$ from where $\b$ has a move to reach $U$.
We denote its complement by in $\us(U,S)$

Finally, we define the set $S_d$ as the set of states with priority $d$.

\begin{figure}[h!]
  \begin{minipage}[b]{0.5\textwidth}
    \begin{center}
      \begin{tikzpicture}[>=latex',join=bevel, thick, scale = 1.2]
        \draw[rounded corners] (0,0)--(3,0)--(3,2)--(0,2)--cycle;
        \node (1) at (2.5, .25) [] {\tiny{$S_d$}};
        \draw[dashed] (1.8,0.2).. controls (2,.75)..(3,1);  
        \fill[rounded corners, opacity=.2] (1.6,.4).. controls (2,1)
        and (2.3, 1).. (3,1.3)--(3,0)--(2,0)--cycle;  
        \draw[ rounded corners](0,2)--(3,2)--(3,0)--(2,0)--cycle;
        \node (3) at (1.5,1.5) [] {\tiny{$\away(S_d,S)$}}; 
      \end{tikzpicture}
      \subcaption{\label{fig:even}Largest priority $d$ is even}
    \end{center}
  \end{minipage}
  \begin{minipage}[b]{0.5\textwidth}
    \begin{center}
      \begin{tikzpicture}[>=latex',join=bevel, thick, scale = 1.2]
        \draw[rounded corners] (0,0)--(3,0)--(3,2)--(0,2)--cycle;
        \draw[rounded corners, dashed] (2.5,2)--(2.5,0); 
        \node (2) at (2.7,.8) [] {\tiny{$S_d$}}; 
        \draw[rounded corners] (2.3,2).. controls (2.3,1) ..(1.8,0); 
        \draw[rounded corners, dotted]
        (0,1.6)--(0,1.5)--(.5,1.5)--(.5,2)--(.4,2);
        \node (3) at (.3,1.7) [] {\tiny{$R_1$}}; 
        \draw[rounded corners, dotted] (.4,1.5) .. controls (2.1, 1.3) .. (3,1); 
        \node (4) at (1.3,1.7) [] {\tiny{$Z_1$}}; 
        \draw[rounded corners, dotted] (0,1)--(.5,1)--(.5,1.5); 
        \draw[rounded corners, dotted]
        (.4,1).. controls (2.5, .5) ..(2.9,0)--(2.8,0);
        \node (5) at (.3,1.2) [] {\tiny{$R_2$}}; 
        \node (6) at (1.3,1) []{\tiny{$ Z_2$}}; 
        \fill[opacity=.2, rounded corners]
        (.4,1).. controls (2.5, .5) ..(2.9,0)--(0,0)--(0,1); \node (1) at
        (1.3,.5) [] {$\vdots$};
      \end{tikzpicture}
    \subcaption{\label{fig:odd}Largest priority $d$ is odd}
    \end{center}
  \end{minipage}
  \caption{\label{fig:winReg}Wining region in a two-player $\bm$ parity game}
\end{figure}
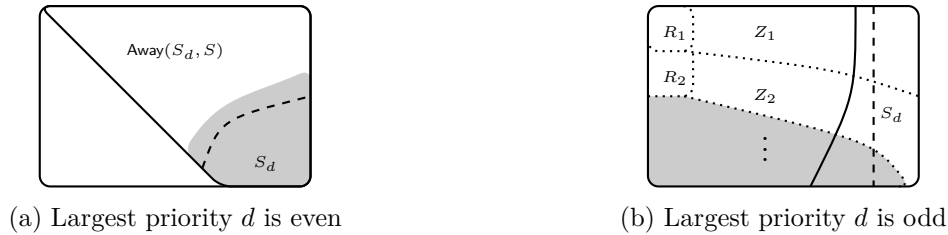

In Figure~\ref{fig:winReg}, we depicted the main ideas behind this construction. In particular we consider two cases: the first one (c.f. Figure~\ref{fig:even}) is when the largest priority $d$ is \emph{even}. In this case we claim that the winning set is given by the largest trap $T$ for $\a$ such that the subgame $\calG(T)$ satisfies the following property; $\e$ wins in $\away(S_d,T)$. Basically, this follows from the fact that in this trap if $\e$ applies the attration strategy when in $\wattr_\e(S_d,T)$ (c.f. gray area in~\ref{fig:even}) and applies her winning strategy in the subgame $\away(S_d,T)$, then we can show that under any strategy $\tau$ of $\a$, $\b$ has a strategy to force any play $\rho$ in $\calG(\sigma,\tau)$ to enter infinitely often  in $\s(S_d,T)$ or to always stay in in $\us(S_d,T)$. This is formalized in the proof of Proposition~\ref{prop:even}.

The second case is the when largest priority $d$ is odd
(c.f. Figure~\ref{fig:odd}). We claim that the winning set of states is given by the largest trap $T$ such that $T$ can be partionned into a sequence of subgames such that each one of them is winning for $\e$. The key argument in this construction is that one can define a total order on the subgames obtained, such that a play can only escape a subgame to visit a subgame that is smaller, thus eventually any play eventually remains forever in a subgame that is winning for $\e$. Details of the correctness are exposed in the proof of Proposition~\ref{prop:odd}.

Let us first describe a procedure to obtain the set winning states for $\e$ in a two-player 
parity $\bm$ game. This is done thanks to Algorithm~\ref{alg:bmpg}. The inner loop that stats in  Line~\ref{line:evenLoop} is the formalization of Figure~\ref{fig:even}; it inductively constructs traps for $\a$ and check that is satisfies the desired condition.
The inner loop that starts in Line~\ref{line:oddLoop} constructs in a iterative manner the sequence of subgames.

  \begin{algorithm}[h!]
    \caption{\label{alg:bmpg} Procedure to compute the set of winning states in a two-player parity $\bm$ game}
    \algsetup{linenodelimiter=}
    \begin{algorithmic}[1] 
      \REQUIRE{Two-player parity $\bm$ game $\calG$ with state space $S$.} 
      \ENSURE{Outputs the winning region for $\e$.}
      \STATE{Let $d$ be the largest priority of $\calG$.}
      \STATE{Let $U_d$ be the set of states with priority $d$}
      \STATE{$U\gets S$}
      \IF{$d$ is even}
        \REPEAT
        \STATE{Compute $\away_\e(U_d, U)$\label{line:evenLoop}}
        \STATE{Compute $R$, the winning region for $\e$ in the subgame $\away(U_d, U)$\label{line:evenInductiveCall}}
        \STATE{$R'\gets U\setminus R$}
        \STATE{Compute $\away_\a(R',U)$\label{line:winAEven}}
        \STATE{$U\gets \away_\a(R',U)$}
        \UNTIL{$R'= \emptyset$}
        \RETURN{$U$}
      \ELSIF{$d$ is odd}
        \STATE{$R'\gets\emptyset$}
        \REPEAT
        \STATE{Compute $\away_\a(U_d,U)$\label{line:oddLoop}}
        \STATE{Compute $R$, the winning region for $\e$ in the 
        subgame $\away(U_d,U)$ \label{line:oddInductiveCall}}
        \STATE{Compute $\wattr_\e(R,U)$, the attractor of $\e$ to $R$ in $\calG(U)$}
        \STATE{$R' \gets R'\cup\wattr_\e(R,U)$}
        \STATE{$U\gets \away_\a(R,U)$}
        \UNTIL{$R = \emptyset$}
       \RETURN{$R'$}
      \ENDIF
    \end{algorithmic} 
  \end{algorithm}

Notice that the above algorithm contains to inductive calls one in Line~\ref{line:evenInductiveCall} and one in Line~\ref{line:oddInductiveCall}. These inductive calls are made on games where the top priority is smaller than $d$, therefore the recursion terminates. Indeed, if $d=0$, $\away_\e(U_d, U)$ is empty, and there is no further recursive call.

\subsection{Correctness of Algorithm~\ref{alg:bmpg}}
We are going to present the most technical part of the paper. The proofs are inspired from the ones presented in \cite{CDGO14} for stochastic games.
While the structure of the proofs are similar, we underline that reasoning in terms of the strategies for Banach and Mazur is more intuitive.

We argue that this is one of the basic contribution of our approach: to concentrate the hard and numeric probabilistic reasoning in one place \cite{VV12}, and then deal more easily with the structural arguments.

\begin{proposition}
  \label{prop:even}
  Let $\calG$ be a parity game where the largest priority $d$ is
  even. All the states 
  in $\calG$ are winning for $\e$ if and only if all the states in $\away(S_d,S)$ are winning for $\e$ in 
  $\calG(\away(S_d,S))$. 
\end{proposition}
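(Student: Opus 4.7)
The plan is to handle both directions of the equivalence separately, relying on the induction that drives the proof of Theorem~\ref{thm:PosDet}: positional determinacy is available for games whose maximum priority is strictly below $d$, so both players admit positional winning strategies in the subgame $\calG(\away(S_d,S))$.

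For the ``$\Rightarrow$'' direction I argue by contraposition. If some $s\in \away(S_d,S)$ is not winning for $\e$ in the subgame, then by the inductive positional determinacy $\a$ has a positional winning strategy $\tau_0$ there, whose actions $a$ satisfy $\bm(q,a)\subseteq\away(S_d,S)$. Extending $\tau_0$ arbitrarily to the rest of the arena yields a big-game strategy $\tau$. Since $\away(S_d,S)$ is a trap for $\e$, every $\e$-action from $\away(S_d,S)$ stays in $\away(S_d,S)$; combined with the safety of $\tau_0$ this forces $\calG(\sigma,\tau)$ starting at $s$, for every $\sigma$, to stay in $\away(S_d,S)$ and to coincide with the subgame residual tree induced by $\sigma$ restricted to $\away(S_d,S)$ and by $\tau_0$. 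Because $\m$ wins the BM game on that subgame tree against every $\e$-strategy, $\m$ defeats $\sigma$ in the big game too, contradicting the hypothesis that every state is winning for $\e$ in $\calG$.

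For ``$\Leftarrow$'', let $\sigma_1$ be $\e$'s positional subgame-winning strategy (from induction) and $\sigma_2$ the positional attractor strategy on $\wattr_\e(S_d,S)$, and define the positional $\sigma$ that agrees with $\sigma_1$ on $\away(S_d,S)$ and with $\sigma_2$ on $\wattr_\e(S_d,S)$. Fix any state $s$ and any $\a$-strategy $\tau$, set $T=\calG(\sigma,\tau)$, and consider $\s(S_d,S)$ and $\us(S_d,S)$ relative to $T$. Two structural observations come first: $\wattr_\e(S_d,S)\subseteq \s(S_d,S)$ via the length-zero path, so $\us(S_d,S)\subseteq \away(S_d,S)$; and $\us(S_d,S)$ is closed under successors in $T$, since any exit to $\s(S_d,S)$ would extend a witness path back to its parent. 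Closure in particular makes $\tau$ restricted to $\us(S_d,S)$ a valid subgame $\a$-strategy. I then define $\b$'s BM strategy on $T$ by cases: at a $\b$-turn in $\s(S_d,S)$, $\b$ extends along a witness path in $T$ to reach $\wattr_\e(S_d,S)$, then follows the attractor strategy down to $S_d$, and finally passes; at a $\b$-turn in $\us(S_d,S)$, $\b$ plays his BM-winning strategy on the subgame residual tree induced by $\sigma_1$ and $\tau$ restricted to $\us(S_d,S)$. By the closure of $\us(S_d,S)$, every consistent play either remains in $\s(S_d,S)$ forever or eventually enters and stays in $\us(S_d,S)$. In the first case, each of $\b$'s infinitely many turns starts in $\s(S_d,S)$ and visits $S_d$, so the limsup of priorities along the flattening equals $d$, which is even. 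In the second case, the tail of the play is a BM play on the subgame residual tree where $\b$'s strategy is winning, so its limsup of priorities (all $<d$) is even. Either way $\b$ wins, proving that $\sigma$ is winning in $\calG$ from $s$.

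The step I expect to be hardest is showing that $\b$'s two-case strategy is actually coherent and executable inside the BM game: namely, that a witness path through $\s(S_d,S)$ followed by the attractor descent to $S_d$ can really be performed within one $\b$-turn, and that once the play has entered $\us(S_d,S)$ the restriction of $\tau$ is ``subgame enough'' for the inductively obtained BM-winning strategy to apply. The closure of $\us(S_d,S)$ under successors in $T$ and the fact that the witness paths live inside $T$ are the key facts that reconcile the two pieces.
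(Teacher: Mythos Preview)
Your proof is correct and follows essentially the same route as the paper: for the ``$\Leftarrow$'' direction you build the same positional $\sigma$ (attractor on $\wattr_\e(S_d,S)$, subgame winner on $\away(S_d,S)$) and the same case split for $\b$ on $\s(S_d,S)$ versus $\us(S_d,S)$, with the same closure-of-$\us$ argument. The only notable difference is in the ``$\Rightarrow$'' direction: the paper dispatches it in one line using just that $\away(S_d,S)$ is a trap for $\e$ (so any big-game $\e$-strategy restricts to a subgame strategy, and a subgame counterstrategy for $\a$ lifts back), whereas you additionally invoke inductive positional determinacy to get a single positional $\tau_0$ for $\a$ --- this is correct but not needed.
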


\begin{proof}
  Denote by $X$ the set $\wattr_\e(S_d,S)$ and by $Y$ the set
  $\away(S_d,S)$.  
  Assume that $Y$ is winning for $\e$ then $S$ is
  winning for $\e$. 
  Let $\sigma_X$ the strategy induced by the
  attractor set $X$, and $\sigma_Y$ the winning strategy for $\e$ over
  $\calG[Y]$. 
  We define the positional strategy $\sigma$ as follows
  \begin{align*}
    \sigma : S_\e &\to A\\
    s&\mapsto
       \begin{cases}
         \sigma_X(s) \text{ if } s \in X \enspace,\\
         \sigma_Y(s) \text{ if } s \in Y \enspace.
       \end{cases}
  \end{align*}

  We show now that $\sigma$ is winning.
  Let $\tau$ be an arbitrary strategy for $\a$, and consider the
  (potentially infinite) residual tree $\calG(\sigma,\tau)$. We will show that in
  $\calG(\sigma,\tau)$, $\b$ wins. 

  Define the strategy $\bsigma$ for $\b$ as follows:
  \begin{align*}
    \bar{\sigma} : \calG(\sigma,\tau) &\to S \cup \bot\\
    \rho&\mapsto
          \begin{cases}
            \rho'\bot \text{ s.t. } \last(\rho') \in S_d \text{ if }
            \last(\rho) \in \s(S_d,S) \enspace,\\
            \bsigma_Y(\bar{\rho}) \text{ if } \last(\rho) \in \us(S_d,S) \enspace,
          \end{cases}
  \end{align*}
  where $\bar{\rho}$ is the longest suffix of $\rho$ that only contains
  states from $\us(S_d,S)$.

  Notice that $\us(S_d, S)$ is a subset of $Y$ and that any play that
  starts in $\us(S_d, S)$, remains in $\us(S_d, S)$.
 
  To see that $\bsigma$ is winning, let $\rho$ be a finite play in
  $\calG(\sigma,\tau)$ that respects $\bsigma$.  and assume that
  $\last(\rho)$ is in $\us(S_d,S)$, then the subsequent play from
  $\bar{\rho}$ is winning because it respects $\bsigma_Y$ which
  is winning, the fact that $\parity$ is prefix independent entails
  that the subsequent play from $\rho$ is winning as well.

  Now assume that $\last(\rho)$ is in $\s(S_d,S)$, then
  \begin{itemize}
  \item if it is $\b$'s turn, by playing according to $\bsigma$ he
    will visit a state with priority $d$ and passes the lead in a
    state in $S_d$, thus visiting a state with priority $d$.
  \item If it is $\m$'s turn, he either plays and passes the lead
    in $\s(S_d,S)$, in which case $\b$ can visit $S_d$ again, or he
    passes the lead in $\us(S_d,S)$, in which case the previous case
    applies.
  \end{itemize}
  Finally, notice that $\sigma$ is positional, thus $\e$ wins using a
  positional strategy.

  Let us prove the converse, assume that all states in $\calG$ are
  winning. By Lemma~\ref{lemma:trap} we know that $Y$ is a trap for $\e$, thus
  if $\e$ does not win in $Y$ it cannot win in $\calG$.
\end{proof}

\begin{proposition}
  \label{prop:odd}
  Let $\calG$ be a parity game where the largest priority $d$ is
  odd. All the states in $\calG$ are winning for $\e$. if and only if
  there exists a partition $\set{Z_i}_{1\le i \le k}$ of $S$ and non
  empty sets $R_i, U_i$ for $i = 1,\ldots, k,$ such that $U_1 = S$ and
  for all $1\le i\le k$
  \begin{itemize}
  \item[1)] $R_i \subseteq U_i\setminus {(U_{i})}_{d}$ is a trap for $\a$ in
    $\calG[U_i]$ and all $R_i$ are winning in $\calG[U_i]$;
  \item[2)] $Z_i = \wattr_\e(R_i, U_i)$;
  \item[3)] $U_{i+1} = U_i \setminus Z_i$.
  \end{itemize}
\end{proposition}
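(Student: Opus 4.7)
The plan is to proceed by induction on the maximum priority $d$, jointly with Proposition~\ref{prop:even} and Theorem~\ref{thm:PosDet}: the full characterisation for all strictly smaller top priorities is available as induction hypothesis. The two directions of the equivalence are treated separately.

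For the direction ($\Leftarrow$), assuming the partition $\{(Z_i, R_i, U_i)\}_{1\le i\le k}$ exists, I will build a positional strategy $\sigma$ for $\e$ by gluing pieces over the partition. On each $R_i$, $\sigma$ coincides with a positional winning strategy of $\e$ in the subgame $\calG[R_i]$: this subgame is well-defined because $R_i$, being a trap for $\a$, induces a subgame; its maximum priority is strictly less than $d$ (since $R_i\cap (U_i)_d=\emptyset$); and $\e$'s winning region in the sub-subgame $\calG[\away_\a((U_i)_d, U_i)]$ is itself a trap for $\a$, so the winning strategy confines the play to $R_i$ and restricts to a win in $\calG[R_i]$. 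On each $Z_i\setminus R_i$, $\sigma$ is the positional attractor strategy in $\calG[U_i]$ toward $R_i$. Given any $\a$-strategy $\tau$, Banach plays on $\calG(\sigma,\tau)$ by the attractor-helper while in $Z_i\setminus R_i$, and by the winning $\b$-strategy on $\calG[R_i]$ (given by induction) while in $R_i$. The crucial step is the stabilisation claim: every infinite play $\rho$ consistent with $(\sigma,\bsigma)$, against any $\mtau$, eventually enters some $R_i$ and stays there forever.

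To prove stabilisation, let $i(s)$ denote the unique index with $s\in Z_i$. First, $\e$ cannot decrease the index, because both the attractor action in $\calG[U_i]$ and the stay-winning action in $\calG[R_i]$ are subgame actions of $\calG[U_i]$, hence have $\bm(s,a)\subseteq U_i$. Second, $\a$ may decrease the index; but setting $i^*:=\liminf_{t} i(\rho_t)$, finiteness of the index range yields a time $T$ beyond which $i(\rho_t)\ge i^*$, with $Z_{i^*}$ visited infinitely often, so the suffix of $\rho$ lives in $\calG[U_{i^*}]$. Inside $\calG[U_{i^*}]$, the $\e$-attractor to $R_{i^*}$ has finite rank, $\b$ gets the lead infinitely often, and each consecutive $\b$-turn in $Z_{i^*}\setminus R_{i^*}$ strictly decreases that rank; so after finitely many such turns the play reaches $R_{i^*}$. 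Once inside, the trap-for-$\a$ property of $R_{i^*}$ in $\calG[U_{i^*}]$, combined with the suffix-confinement of $\a$'s realised moves to $U_{i^*}$, prevents exit, and the inductive winning $\b$-strategy on $\calG[R_{i^*}]$ secures the parity objective.

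For the direction ($\Rightarrow$), assuming every state of $\calG$ is winning for $\e$, I run the odd branch of Algorithm~\ref{alg:bmpg}, iteratively defining $R_i$ as $\e$'s winning region in $\calG[\away_\a((U_i)_d, U_i)]$ (well-defined by induction, as the top priority there is strictly below $d$), $Z_i=\wattr_\e(R_i, U_i)$, and $U_{i+1}=U_i\setminus Z_i$. Conditions (2) and (3) hold by construction. For (1): $R_i\subseteq U_i\setminus (U_i)_d$ because $(U_i)_d\subseteq \wattr_\a((U_i)_d, U_i)$; $R_i$ is a trap for $\a$ in $\calG[U_i]$ because $\away_\a((U_i)_d, U_i)$ is one and $\e$'s winning region inside it is again a trap for $\a$ (traps compose); and the winning property lifts from the sub-subgame to $\calG[U_i]$ because $\e$'s winning strategy confines the play to the trap $\away_\a((U_i)_d, U_i)$. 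Termination with $\bigcup_i Z_i=S$ follows by contradiction: if some iteration yields $R=\emptyset$ with remaining $U^*\neq\emptyset$, then by induction $\a$ wins positionally throughout $\calG[\away_\a((U^*)_d, U^*)]$, and composing with $\a$'s attractor strategy to $(U^*)_d$ yields a positional $\a$-strategy winning from every state of $U^*\subseteq S$, contradicting the hypothesis.

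The principal obstacle is the stabilisation lemma in the ($\Leftarrow$) direction: the argument must reconcile $\b$'s attractor descent with $\m$'s capacity to push the play to higher-indexed strata and $\a$'s oscillations within $U_{i^*}$. It is the Banach--Mazur analogue of Zielonka's classical fixed-point argument, with $\b$'s finitely many decisive turns needing to be orchestrated against the constraint that $\b$ and $\m$ each eventually pass the lead.
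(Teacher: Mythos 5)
Your overall architecture matches the paper's: in the ($\Leftarrow$) direction you glue the attractor strategy on $Z_i\setminus R_i$ with the winning strategy on $R_i$ and let $\b$ attract to $R_i$ and then follow his winning strategy; in the ($\Rightarrow$) direction you run the odd branch of Algorithm~\ref{alg:bmpg}, and your non-emptiness/termination argument (invoking positional determinacy for $\a$ in the sub-subgame, inside a joint induction) is a mild variant of the paper's more direct argument that $\m$ can keep visiting $S_d$ from $\s(S_d,S)$ or wins in $\us(S_d,S)$; that part is fine, modulo the same unproved-but-plausible claim the paper also makes (that $\e$'s winning region is a trap for $\a$).

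The genuine issue is in your stabilisation lemma. You correctly observe that, unlike what the paper asserts in its case $ii)$ (where it claims the index can only move to some $j$ with $i<j\le k$), $\a$ may decrease the index, since conditions 1)--3) do not prevent $\a$ from playing actions $a$ with $\bm(q,a)\not\subseteq U_i$; your liminf set-up is the right instinct. But the two steps you build on it do not close the case. First, the concluding step ``trap-for-$\a$ in $\calG[U_{i^*}]$ plus confinement of realised moves to $U_{i^*}$ prevents exit from $R_{i^*}$'' is unsound: the trap property only constrains $\a$'s \emph{subgame} actions (those with $\bm(q,a)\subseteq U_{i^*}$), whereas $\a$ can play from $R_{i^*}$ an action with some successors outside $U_{i^*}$ and some in $U_{i^*}\setminus R_{i^*}$; if $\m$ realises the latter, the play leaves $R_{i^*}$ without ever dropping the index below $i^*$, so your suffix-confinement does not rule it out, and the play may then revisit priority-$d$ states. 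Second, the rank-descent claim during $\b$'s turns suffers from the same phenomenon: when $\a$ plays a non-subgame action of $U_{i^*}$ there need not be any successor of smaller attractor rank, so ``each consecutive $\b$-turn strictly decreases the rank'' is not justified (and the winning strategy $\bsigma_{i^*}$ is only defined on the subgame, so $\b$'s behaviour after such an action is not even specified). Closing this requires an extra ingredient that neither your argument nor your strategy definition supplies -- essentially that $\b$ can exploit his ability to prolong his lead and the fact that $\a$ cannot see the occurrences of $\bot$ (so $\a$ cannot time his escaping actions to $\m$'s leads), e.g.\ by having $\b$ answer any escaping action by routing the play out of $U_j$ towards lower-index strata where $\a$ is genuinely trapped. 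For comparison, the paper handles this step differently (and very tersely), by asserting monotonicity of the index and settling by finiteness; your liminf variant is a reasonable alternative skeleton, but as written it has the gap above at exactly the delicate point.
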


\begin{proof}
  Let $\sigma_i$ be the winning strategy for $\e$ in $\calG(R_i)$, let
  also $\asigma_i$ be the positional strategy induced by the attractor
  $Z_i$.  Define the following strategy $\sigma$ for $\e$,
  \begin{align*}
    \sigma : S_\e &\to A\\
    s&\mapsto
       \begin{cases}
         \sigma_i(s) \text{ if } s \in R_i \enspace,\\
         \asigma_i(s) \text{ if } s \in Z_i \enspace.
       \end{cases}
  \end{align*}  
  
  Let $\tau$ be an arbitrary strategy for $\a$, and let us show that
  $\b$ wins in the (potentially infinite) residual tree
  $\calG(\sigma,\tau)$.  We define the following strategy $\bsigma$
  for $\b$:
  \begin{align*}
    \bar{\sigma} : \calG(\sigma,\tau) &\to S \cup \bot\\
    \rho&\mapsto
          \begin{cases}
            \rho'\bot \text{ s.t. } \last(\rho') \in R_i \text{ if }
            \last(\rho) \in Z_i \enspace,\\
            \bsigma_i(\bar{\rho}) \text{ if } \last(\rho) \in R_i
            \enspace,
          \end{cases}
  \end{align*}
  where $\bar{\rho}$ is the longest suffix of $\rho$ that only
  contains states from $R_i$.  
  
  Let us show that $\bsigma$ is winning in the residual tree
  $\calG(\sigma,\tau)$.
  Let $\rho$ be a finite play that respects $\bsigma$, Since $Z_i$ is
  a partition, it follows that $\last(\rho)$ is in some $Z_i$.
  Assume that it is $\m$'s turn, then whatever action he plays, he
  will either $i)$ pass the lead  in a state in $Z_i$ or $ii)$ pass
  the lead in some $Z_j$ such that $i \neq j$. 

  If $i)$ holds, then $\b$ will move the play to $R_i$ and all the
  subsequent plays will remain there since $R_i$ is a trap for
  $\a$. Moreover, since $R_i$ is winning for $\e$ over $\calG(U_i)$ it
  follows that $\b$ wins.

  If $ii)$ holds, then since there are only finitely many such $j$ and
  because by construction we have $i < j \le k$ ,
  the play will settle in some $Z_j$ and $\b$ can move the play to
  $R_j$ and the previous arguments apply.

  The converse implication now. Assume that $\e$ wins from every state
  in $\calG$, we construct a partition of $S$ that meets the
  requirements of the statment. Let $X$ be the set
  $\wattr_\a(S_d,S)$ and let $Y$ be $S\setminus X$. We claim that since all states in $\calG$ are
  winning for $\e$, then the winning region of $\e$ in $\calG(Y)$ is
  non-empty. Assume toward a contradiction that it is not the case,
  then it means that all states in $Y$ are loosing
  c.f. whenever $\e$ chooses a strategy, there exists a strategy
  $\tau$ in $\calG(Y)$ such that in the residual tree $\calG(Y)$, $\b$
  does not win). Let us show that this implies that all states in $S$
  are loosing for $\e$. Consider the set of states $\s(S_d,S)$ and
  $\us(S_d,S)$ as defined in the proof of Proposition~\ref{prop:even}. As long
  as the current play is in $\s(S_d,S)$ then $\m$ can visit states in
  $S_d$, if the play moves to $\us(S_d,S)$ then $\m$ wins, a
  contradiction. Thus the winning region in $\calG[Y]$ for $\e$ is
  non-empty (it is also a trap for $\a$), let this region be $R_1$, and $Z_1$ be
  $\wattr_\e(R_1,S)$. If $S\setminus Z_1$ is empty, then we are
  done. Otherwise we repeat the construction over $S \setminus
  Z_1$. Since $Z_1$ is non empty, this construction terminates.
\end{proof}

\subsection{Posional strategies}

\begin{proposition}
  \label{prop:Epos}
  If $\e$ wins, then she has a positional winning strategy
\end{proposition}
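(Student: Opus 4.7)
The plan is to prove Proposition~\ref{prop:Epos} by induction on the maximum priority $d$ of $\calG$, leveraging the structural decompositions of Propositions~\ref{prop:even} and~\ref{prop:odd}. The key observation is that the proofs of those two propositions do not merely characterise the winning region; they already construct explicit positional winning strategies for $\e$ by combining a positional attractor strategy with a winning strategy on a strictly smaller subgame, namely one in which priority $d$ has been eliminated. Thus Proposition~\ref{prop:Epos} is essentially the inductive closure of these constructions, and the remaining task reduces to verifying that the recursion genuinely bottoms out and that the pieces combine cleanly.

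For the base case, with a single priority $d$, either $d$ is even and $\e$ wins everywhere using any positional strategy, or $d$ is odd and $\e$ never wins. For the inductive step, let $W \subseteq S$ denote the winning region of $\e$ in $\calG$. Inspecting Algorithm~\ref{alg:bmpg}, one sees that $W$ is itself a trap for $\a$, so $\calG(W)$ is a legitimate subgame in which $\e$ wins from every state; let $d$ be its largest priority. If $d$ is even, Proposition~\ref{prop:even} says $\e$ wins throughout $\calG(\away_\e(S_d, W))$, a subgame that omits priority $d$ and hence has strictly fewer distinct priorities. The induction hypothesis then supplies a positional winning strategy $\sigma_Y$ on that subgame; combining it with the positional attractor strategy $\sigma_X$ to $S_d$ exactly as in the proof of Proposition~\ref{prop:even} yields a positional winning strategy on all of $W$. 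If $d$ is odd, Proposition~\ref{prop:odd} produces a partition $\{Z_i\}_{1 \le i \le k}$ of $W$ together with traps $R_i \subseteq U_i \setminus (U_i)_d$ on which $\e$ wins; each $\calG(R_i)$ omits priority $d$, so the induction hypothesis yields positional winning strategies $\sigma_i$, and these glue with the positional attractor strategies $\asigma_i$ on the regions $Z_i$ into a global positional winning strategy on $W$, following the case split of the proof of Proposition~\ref{prop:odd}.

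The main obstacle is bookkeeping rather than a new mathematical idea: one must check that $W$ is a trap for $\a$ (so that $\calG(W)$ really is a subgame on which the characterisation propositions can be invoked), that the odd-case decomposition of Proposition~\ref{prop:odd} terminates (which follows because $|U_i|$ strictly decreases at each iteration, $R_i$ being non-empty), and that the locally defined strategies, living on pairwise disjoint slices of $S_\e$, merge into a single well-defined function $\sigma : S_\e \to A$. With these routine verifications in hand, Proposition~\ref{prop:Epos} follows as the clean inductive corollary of the case analyses already carried out in the previous two propositions.
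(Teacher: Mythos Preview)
Your proposal is correct and follows essentially the same route as the paper: induction on the largest priority $d$, invoking the explicit strategies built in the proofs of Propositions~\ref{prop:even} and~\ref{prop:odd}, which are each a positional attractor strategy glued to a winning strategy on a subgame with strictly fewer priorities. The only difference is that you are more explicit about first restricting to the winning region $W$ and checking that it is a trap for $\a$ before applying those propositions, a point the paper leaves implicit.
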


\begin{proof}
  We prove this by induction over the largest priority $d$ available in the game.
  
  If $d = 0$, then any positional strategy is winning.
  
  Assume now that $d$ is even, then from the proof of Proposition~\ref{prop:even}, we now that the winning strategy $\sigma$ uses two strategies; the attraction strategy $\sigma_X$ which is positional, and $\sigma_Y$ defined in a subgame with less priorities thus it is positional by induction.
  
  If $d$ is odd, then thanks to the proof of Proposition~\ref{prop:odd} we know that each $\asigma_i$ is positional because it is induced from an attractor, and each
  $\sigma_i$ is positional again by induction.
\end{proof}
 
\begin{proposition}
  \label{prop:Apos}
  If $\a$ wins, then he has a positional winning strategy
\end{proposition}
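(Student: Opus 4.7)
The plan is to deduce Adam's positional winning strategy from Eve's via a duality argument, sidestepping a separate inductive construction mirroring Propositions~\ref{prop:even} and~\ref{prop:odd}. Given $\calG = (S, (S_\e, S_\a), A, \bm)$ with priority function $\chi$, I would first introduce the \emph{dual game} $\calG^\ast$ that has the same state space, action set, and transition relation $\bm$, but with the Arena roles swapped ($S_\e^\ast = S_\a$ and $S_\a^\ast = S_\e$) and with shifted priority function $\chi^\ast(s) = \chi(s) + 1$. Shifting all priorities by one flips the parity of every $\limsup$, so the parity objective $\parity^\ast$ in $\calG^\ast$ is exactly the complement of the parity objective $\parity$ in $\calG$.

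The core lemma to establish is then: $\a$ wins strategically in $\calG$ from $s$ if and only if $\e$ wins strategically in $\calG^\ast$ from $s$. First I would note that a pair of Arena strategies $(\sigma,\tau)$ in $\calG$ corresponds, after the role swap, to the pair $(\sigma^\ast,\tau^\ast)=(\tau,\sigma)$ in $\calG^\ast$, and that the residual trees coincide, $\Gcal(\sigma,\tau)=\Gcal(\sigma^\ast,\tau^\ast)$ as subsets of $S^\omega$. On this common tree, the winning condition for the team $(\a,\m)$ in $\calG$ is that Mazur wins the Banach--Mazur game with objective $\parity$; by the classical topological characterisation~\cite{Ox} this is equivalent to $\parity$ being meager in the tree, equivalently to $\parity^\ast$ being co-meager, which is exactly the winning condition for $(\e,\b)$ in $\calG^\ast$. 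After renaming $(\sigma,\tau) \leftrightarrow (\tau^\ast,\sigma^\ast)$, the two $\exists\forall$-clauses coincide, so the equivalence follows. Applying Proposition~\ref{prop:Epos} to $\calG^\ast$ then produces a positional winning strategy $\sigma^\ast : S_\e^\ast \to A$ for Eve, and since $S_\e^\ast = S_\a$ this is precisely a positional winning strategy for $\a$ in $\calG$.

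The main obstacle I expect is the topological step underlying the equivalence: one needs the dichotomy ``Mazur wins the Banach--Mazur game with objective $\Phi$ iff Banach wins with objective $\Phi^c$'' to hold on the residual tree, not merely on the full Cantor space $S^\omega$. This reduces to invoking the Banach--Mazur theorem on the residual tree (which is a closed subset of $S^\omega$, hence Polish with the induced topology) and to checking that the winner is determined purely by whether the objective is meager or co-meager, independently of the paper's convention that Mazur moves first. Both points are standard but warrant an explicit verification to make the reduction airtight; once granted, the rest of the argument is a purely syntactic translation of strategies across the role swap.
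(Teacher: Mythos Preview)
Your duality argument has a genuine gap at the topological step. The classical Banach--Mazur theorem does \emph{not} assert that Mazur (the first mover) wins if and only if the objective is meager; the correct statement is that Mazur wins if and only if the objective is meager in \emph{some} nonempty basic open set (a cone of the residual tree), while Banach wins if and only if the objective is comeager. Hence ``Mazur wins for $\parity$'' is strictly weaker than ``$\parity$ is meager in the whole tree'', and your chain
\[
\text{Mazur wins for }\parity \;\Leftrightarrow\; \parity\text{ meager} \;\Leftrightarrow\; \parity^c\text{ comeager} \;\Leftrightarrow\; \text{Banach wins for }\parity^c
\]
breaks at the first equivalence.

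A three-state game already exhibits the failure. Take $S=\{s,t,u\}$, a single action $a$, transition relation $\bm=\{(s,a,t),(s,a,u),(t,a,t),(u,a,u)\}$, and priorities $\chi(s)=\chi(u)=1$, $\chi(t)=2$. Arena choices are trivial, so the residual tree is just the underlying graph from $s$. In $\calG$, Mazur opens with $s\to u$ and the play is absorbed in $u$ with odd maximal priority; thus $\a$ and $\m$ win from $s$. In your dual $\calG^\ast$ the shifted priorities give $\chi^\ast(t)=3$, $\chi^\ast(u)=2$, so $\parity^\ast$ requires ending in $u$; but Mazur is \emph{still} the first mover in $\calG^\ast$, opens with $s\to t$, and the play is absorbed in $t$ before Banach ever acts. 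Hence $\e$ does not win in $\calG^\ast$, and the reduction to Proposition~\ref{prop:Epos} is unavailable.

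Conceptually, the two-player $\bm$ game is asymmetric by design: Mazur always leads, and this is exactly what makes $\e$'s winning correspond to \emph{almost-sure} winning while $\a$'s winning corresponds to \emph{positive} winning in the stochastic game (Theorem~\ref{thm:trsfrt}). Swapping Arena roles and shifting priorities, while leaving Mazur first, does not interchange these two notions. The paper therefore does not argue by duality; it proves Proposition~\ref{prop:Apos} by a direct induction on $|S|$, reading off Adam's positional strategy from the attractor layers produced by Algorithm~\ref{alg:bmpg} (the sets $\wattr_\a(R',U)$ in the even case and $\wattr_\a(U_d,U)$ in the odd case).
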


\begin{proof}
  By induction on the number of states. If $|S| = 1$ is then the result follows.
  
  Assume it is the case for any game with state space $S$
  We will again consider two cases; when the highest priority is even and when it is odd.
  
  In former case, thanks to Algorithm~\ref{alg:bmpg}-Line~\ref{line:winAEven}, we know that the winning region of $\a$ is a finite union of sets of the form $\wattr_\a(R',U)$ where $\a$ wins the parity game played in $R'$, thus since $\m$ has the lead in the initial state, it suffices for $\a$ to play a positional strategy to reach with the help of $\m$, $R'$ is reached in one move. Once in $R'$ $\a$ applies a positional winning strategy that exists by induction. Any play that respects this strategy will stay forever in $R'$ since it is a trap for $\e$.
  
  In the latter case, from Algorithm~\ref{alg:bmpg}-Line~\ref{line:oddLoop}, we know that the winning region is a finite union of sets of the form $\wattr_\a(U_d,U)$ such that $i)$ $U$ induces a subgame and $ii)$ $d$ is the largest priority in $U$. Thus, $\a$ can always apply an attraction strategy and with the help of $\m$ continuously visit states in $U_d$. This strategy is clearly positional.
\end{proof}

These two last propositions yield Theorem~\ref{thm:PosDet}.

\vspace{12pt}

We just mention here that Theorem~\ref{thm:trsfrt} can be easily extended as usual from parity objectives to $\omega$-regular objectives, by just making the product of the game and the parity automaton that accepts the objective. We omit the straightforward details here.


\section{Conclusions}
What we have shown in this paper, while technically non trivial, can
still be considered only as a "sanity" check - the Banach-Mazur game
can replace the probabilities in a suitable setting. While the structure of the proofs mimics the corresponding probabilistic proofs, the advantage is that all probabilistic reasoning is formally "factored out", so that following the proof becomes easier.

However, we consider that our result can be made to go
further. In particular, we note that in this paper, we use the fact
that stochastic parity games are positionaly determined. This result
is needed in our proof of Theorem~\ref{thm:trsfrt}. What we would like
to achieve is to use positional determinacy of EBAM games, to
\emph{prove} positional determinacy of stochastic games. This what we
set out to do next.

Another contribution of this paper is to simplify the presentation of \cite{ACV10}. Notably there the authors introduced the heavy concept of "move tree" which we have simplified here.



\nocite{*}
\bibliographystyle{eptcs}
\bibliography{references}
\end{document}
